\pdfoutput=1

\documentclass{article}
\usepackage{ijcai25}

\usepackage{times}
\usepackage{soul}
\usepackage{url}
\usepackage[hidelinks]{hyperref}
\usepackage[utf8]{inputenc}
\usepackage[small]{caption}
\usepackage{graphicx}
\usepackage{amsmath}
\usepackage{amsthm}
\usepackage{booktabs}
\usepackage{algorithm}
\usepackage{algorithmic}

\usepackage{amssymb}
\usepackage{subcaption}
\usepackage{multirow}
\usepackage{comment}
\usepackage{cleveref}
\usepackage{float}
\usepackage{thmtools}
\usepackage{thm-restate}
\usepackage{newfloat}
\usepackage{listings}
\floatstyle{ruled}
\newfloat{listing}{tb}{lst}{}
\floatname{listing}{Listing}

\newcommand{\ips}{\bar{\tau}}
\newcommand{\lps}{\tau}

\newcommand{\R}{\mathbb{R}}

\newtheorem{theorem}{Theorem}

\title{Large-Scale Trade-Off Curve Computation for Incentive Allocation with Cardinality and Matroid Constraints}

\author{
Yu Cong$^1$, Chao Xu$^1$ and Yi Zhou$^1$
\affiliations
$^1$University of Electronic Science and Technology of China
\emails
\{yucong143, the.chao.xu\}@gmail.com, zhou.yi@uestc.edu.cn
}

\begin{document}

\maketitle
\begingroup
\renewcommand{\thefootnote}{}
\footnotetext{Supplementary materials are available at \url{https://github.com/congyu711/incentive-allocation-supplementaries}.}
\endgroup

\begin{abstract}
    We consider a large-scale incentive allocation problem where the entire trade-off curve between budget and profit has to be maintained approximately at all time. 
    The application originally comes from assigning coupons to users of the ride-sharing apps, where each user can have a limit on the number of coupons been assigned. We consider a more general form, where the coupons for each user forms a matroid, and the coupon assigned to each user must be an independent set. We show the entire trade-off curve can be maintained approximately in near real time.
  \end{abstract}

  \section{Introduction}

In the current age, we are dealing with increasingly large incentive allocation problems. One is given a fixed amount of budget to allocate to different incentives to maximize some objective. A prototypical example is assigning a single coupon to each rider in ridesharing apps, where each assignment uses up some marketing budget, and increase some metric such as rides or driving hours \cite{wang_shmoys_2019}. For example, an incentive allocation problem under cardinality constraints can be formalized as the following integer program.

\begin{align*}
    \max_x \quad \sum_{i}\sum_{j} v_{ij}x_{ij}\\
    \text{s.t.}   \quad \sum_{i}\sum_{j} c_{ij}x_{ij} &\leq B\\
    \sum_j x_{ij}&\leq k\phantom{\in \{0,1\}} \quad \forall i\\
    x_{ij}&\in \{0,1\}\phantom{\leq k} \quad \forall i, \forall j\\
\end{align*}

For each agent $i$, there is a candidate incentive set consisting of coupons, where the value and cost of the $j$th coupon is $v_{ij}$ and $c_{ij}$, respectively. The goal is to select at most $k$ coupons for each agent and to make sure that the total cost of the selected coupons does not exceed the budget $B$ while maximizing the total value of those coupons.

The problem is a variant of the knapsack problem, and computing the exact optimum is NP-hard. However, the fractional optimum is very close to the integer optimum, even if there are strong constraints in the allocation \cite{CAMERINI1984157}. Hence, in this article, we only consider finding the fractional optimum.

The allocation under a fixed budget is often insufficient for decision and analytic purposes. For example, the company might want to decide on the total budget for a campaign. A data scientist might need to know how much marketing spend is required to obtain an expected profit. These questions can all be answered if the entire trade-off curve of the budget vs profit can be computed.
There can be more complicated cases where the decider might not be a person but an algorithm. Consider the following case: The ridesharing company wants to allocate a budget to two campaigns, one of which is incentive allocation. We are optimizing $\max_x f(x)+g(B-x)$, where $f$ would map to the trade-off curve in the incentive allocation problem and $g$ is the value of the other campaign. $g$ can be complicated and in that case an algorithm optimizing the sum would evaluate $g$ at many different values of $x$. Implementing such an algorithm will be a lot faster and easier if we can compute the trade-off curve $f$ quickly.
Moreover, the curve is not static: In practice, the cost and value of a coupon are usually predicted by algorithms or models. An agent might take a certain action, say take a ride, and the model would change its predictions of the expected profit of each incentive associated with the agent. 

Hence, we investigate the \emph{dynamic} incentive allocation trade-off curve problem, where the \emph{entire} trade-off curve has to be maintained, while supporting updates(insertions and deletions) of agents' incentives. 
In practice, the number of agents is large (in the millions), the choices of incentives for each agent is relatively small (a few hundred), and no agent is critical to the objective. That is, removing any agent would not significantly impact the objective. Also, we assume each agent is independent: the incentives to one do not affect others. 

Generally, for each agent, there can be constraints on the allocation of incentives. We consider some examples in the ridesharing apps. A user can be assigned at most one of the incentives (\emph{multiple choice constraint}). A user can be assigned no more than $p$ incentives (\emph{cardinality constraint}). A user can have $2$ incentives for weekends, and $2$ incentives for weekdays, but only $3$ incentives in total (special case of \emph{matroid constraint}). The most general constraints are given as an arbitrary family of feasible subset of incentives. Our work would also consider how the problem changes under different constraints, but we mainly focus on cardinality and matroid constraints. 

Finally, we want the implementation to be easily transferable to queries in a modern OLAP databases.

\begin{table*}[!htb]
\centering
    \begin{tabular}{ccccc}
         Constraint Type & Result & Fixed budget & Trade-off curve & Dynamic\\
         \bottomrule
         \hline
         \multirow{3}{*}{Multiple Choice}& \cite{Dyer84,ZEMEL1984123}& $O(m)$ & -  & -\\
         &\cite{10.1109/ITSC55140.2022.9922143} & - & $O(m\log m)$  & No\\
         & \Cref{thm:cardinality} & - & $O(m\log m)$ & Yes\\
         \hline
         \multirow{4}{*}{Cardinality}& \cite{DavidPisinger} & $O(m\log VC)$  & -  & -\\
           & \cite{DavidPisinger} & $O(mp+nB)$  & -  & -\\
          & \cite{minimaxoptimization} & $O(m\log m)$  & -  & -\\
          & \Cref{thm:cardinality} & - & $O((k+m)\log m)$ & Yes\\
         \hline
         \multirow{3}{*}{Matroid}& \cite{CAMERINI1984157} & $O(m^2 + T \log m)$ & - & -\\
        & \cite{minimaxoptimization} & $O(T \log m)$ & - & -\\
        & \Cref{thm:matroid} & - & $O(Tk+k\log m)$ & Yes\\
        \bottomrule
    \end{tabular}
\caption{Comparison of algorithms for incentive allocation: $m$ is the total number of incentives, $M$ is the maximum number of incentives over each agent, $p$ is the max rank of the matroid constraint over each agent, or the limit in the cardinality constraint. $V$ and $C$ is the maximum value and cost of the incentives, respectively. $B$ is the budget. $k=O(mp^{1/3})$ bounds the number of breakpoints of the trade-off curve; in the cardinality bound we take $k=mp^{1/3}$. $T$ is the time complexity of matroid optimum base algorithm.}
\label{runtimetable}
\end{table*}

\paragraph{Previous Works.}
The (integral) incentive allocation problem for a fixed budget $B$ is a knapsack problem with side constraints. Our work is concerned with the linear programming relaxation, the fractional version, of the knapsack problem. 

When each agent is allocated exactly $1$ incentive, it is also called the \emph{continuous multiple choice knapsack problem} (CMCKP), and was widely studied. Sinha and Zoltners \cite{Zoltners} showed the optimum gap from the integral case is the value of a single incentive. Later, optimum linear time algorithm was discovered \cite{Dyer84,ZEMEL1984123}.
When each agent is required to be allocated exactly (or at most) $p$ incentives, namely having cardinality constraint on the incentives, it is equivalent to the \emph{continuous bounded multiple choice knapsack problem} (CBMCKP). CMCKP is the special case of CBMCKP when $p=1$.  Pisinger showed a reduction from CBMCKP to CMCKP, but running time depends on $B$ \shortcite{DavidPisinger}. In the same paper, Pisinger used the Dantzig-Wolfe decomposition to devise a faster polynomial time algorithm. However, the algorithm's running time depends on the size of the value and the cost, therefore it is not a strongly polynomial time algorithm. 

When the incentive for each agent must form an independent set (or a base) in a matroid, it is the (continuous) matroidal knapsack problem \cite{CAMERINI1984157}. The running time for finding an optimum is $O(m^2+ T\log m)$ time, where $m$ is the number of incentives and $T$ is the complexity of finding the optimum base for a given weighting of the elements in the matroid. After the technique of parametric search was introduced and improved \cite{Megiddo,Cole87}, the running time was improved to $O(T\log m)$ \cite{minimaxoptimization}. CBMCKP is a special case of the matroidal knapsack problem when the matroid is a $p$-uniform matroid. Although not explicitly stated, the matroid algorithm can be used for CBMCKP, and obtain an $O(m\log m)$ time algorithm because it takes $O(m)$ time to find the optimum base for a uniform matroid \cite{minimaxoptimization}. See \Cref{runtimetable} for a comparison of results. 

From another point of view, the incentive allocation problem can be considered a matroid optimization problem with an additional linear constraint. For general matroid this problem admits no fully polynomial-time approximation scheme \cite{doronarad_et_al:LIPIcs.ICALP.2024.56}, but admits an EPTAS \cite{DoronAradKulikShachnai2026}. For laminar matroids, an FPTAS is known \cite{DoronAradKulikShachnai2023}, with an improved running time in \cite{YangKovalyovZhang2025}. These approximation results concern integral solutions for a fixed budget.

For readers familiar with parametric or multi-objective optimization, it may also be helpful to view the trade-off curve as the Pareto curve between objectives. Under the multi-objective optimization framework, we are solving matroidal knapsack problem with an additional objective that minimize the total budget. Computing the trade-off curve can also be considered a sensitivity analysis problem, where the budget is the parameter whose sensitivity we are interested in. While these interpretations provide additional insight, our analysis is mainly conducted within the linear programs for the incentive allocation problem, as LPs better capture the properties of the problem and are easier to understand.

We are not aware of explicit computation of the entire trade-off curve except in the CMCKP case. A recent study in the transportation economics area \cite{10.1109/ITSC55140.2022.9922143} considered each agent must pick one of a few incentives, each having a different impact to social welfare. The regulator consults the entire trade-off curve for informed policymaking. The algorithm has a running time of $O(m\log m)$. The result is static, as it does not concern about updating the curve when individual incentive changes. 

\paragraph{Our contribution.}

We show that the entire incentive allocation trade-off curve is piecewise linear and concave. We construct a conceptually simple method to maintain the curve under different constraints, while allowing updates in logarithmic time with respect to number of fundamental changes of the trade-off curve.
In particular,
\begin{enumerate}
    \item In the multiple choice constraint case, the result matches the current fastest algorithm for static trade-off curve, but our implementation allows dynamic updates.
    \item In the cardinality constraint case, we show the \emph{entire trade-off curve} can be computed in $O(mp^{1/3}\log m)$ time.
    \item We also observe that our problem is related to the $k$-level problem in computational geometry and parametric matroid optimization. The connection shows a subquadratic bound to the number of breakpoints in the trade-off curve, when previously the bound is quadratic.
\end{enumerate}

Finally, we show part of the algorithm can be handled by modern OLAP database to avoid implementation complexity.

As a preview, we will prove the following theorems.

\begin{restatable}{theorem}{cardinality}
    \label{thm:cardinality}
    Consider an incentive allocation problem with a total of $m$ incentives. 
    If there is a cardinality $p\geq 1$ constraint on each agent, let $k=mp^{1/3}$. The trade-off curve has $O(k)$ breakpoints and can be computed in $O((k+m)\log m)=O(mp^{1/3}\log m)$ time.
\end{restatable}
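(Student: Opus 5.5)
We will work with the LP relaxation and its Lagrangian dual in the budget constraint. For a multiplier $\lambda\ge 0$, the Lagrangian relaxation decouples over agents. Each agent $i$ solves $\max\sum_j (v_{ij}-\lambda c_{ij})x_{ij}$ subject to $\sum_j x_{ij}\le p$ and $0\le x_{ij}\le 1$. Since the uniform-matroid polytope is integral, the optimum simply takes the (at most) $p$ incentives with the largest positive reduced value $v_{ij}-\lambda c_{ij}$.

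By LP duality, the trade-off curve $f(B)$ is the concave conjugate-type envelope $f(B)=\min_{\lambda\ge0}\bigl(\lambda B+\sum_i g_i(\lambda)\bigr)$, where $g_i$ is agent $i$'s Lagrangian value. Hence $f$ is piecewise linear and concave. Its breakpoints correspond to the values of $\lambda$ at which some agent's optimal set changes. As $\lambda$ sweeps from $\infty$ to $0$, the budget used increases monotonically, and the optimal solution changes only at those critical $\lambda$. Between consecutive critical values the optimal solutions form a segment of $f$, and at a critical value we interpolate fractionally between the two adjacent integral solutions.

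\textbf{Counting breakpoints.} For agent $i$ with $m_i$ incentives, view each incentive as a line $\ell_{ij}(\lambda)=v_{ij}-\lambda c_{ij}$, together with the zero line (or $p$ dummy zero incentives). The optimal set changes exactly when the $p$-level of this arrangement of lines has a vertex. This is the $p$-level of an arrangement of $m_i+p$ lines, restricted to the window $\lambda\ge 0$. Dey's bound gives $O((m_i+p)p^{1/3})$ vertices on the $p$-level. To keep the bound at $O(m_ip^{1/3})$, we may replace $p$ by $\min(p,m_i)$ without loss of generality, since a cardinality limit larger than $m_i$ is vacuous. Summing over agents gives $O(mp^{1/3})=O(k)$ breakpoints of $f$, because the breakpoints of $f$ are the union of the agents' breakpoints.

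\textbf{Computing the curve.} For each agent we compute its $p$-level by a kinetic sweep over $\lambda$. We maintain the sorted order of the lines in a balanced structure and track the boundary between rank $p$ and rank $p+1$. Each swap at that boundary is an event, and kinetic certificates for adjacent pairs near the boundary are stored in a priority queue. The goal is to implement the sweep so that only swaps involving the $p$-th position generate work, at $O(\log m)$ each. The standard approach is to maintain two heaps, the top-$p$ set keyed by the minimum and the rest keyed by the maximum, as kinetic heaps. This yields $O((m_i+\text{level complexity})\log m)$ per agent, and perhaps an additional polylog factor that must be argued away.

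Each agent produces a sorted list of events $(\lambda, \Delta\text{cost}, \Delta\text{value})$. We merge all events across agents by sorting them globally in $O(k\log k)=O(k\log m)$ time. Prefix sums then give the concave curve's breakpoints in order of decreasing slope $\lambda$. The total time is $O((k+m)\log m)$.

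\textbf{Main obstacle.} The hardest step is the per-agent $p$-level computation in $O((m_i+K_i)\log m)$, where $K_i$ is the level complexity. A naive kinetic sorted list processes all $O(m_i^2)$ crossings. Plain kinetic heaps carry amortized bounds with extra logarithmic factors. If a clean bound is hard, we would first try the two-heap scheme with a separation certificate: the min of the top-$p$ set must be at least the max of the rest. When this certificate fails, exactly one element is exchanged, producing one level vertex. Internal heap certificates are handled by kinetic tournament trees, whose event count we would need to bound by $O((m_i+K_i)\log m_i)$ to conclude.
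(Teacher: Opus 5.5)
Your structural argument is correct and follows the paper. It uses the Lagrangian form $\min_{\lambda\ge 0}\bigl(\lambda B+\sum_i g_i(\lambda)\bigr)$ of the curve, observes that an agent's signature can bend only at vertices of the $p$-level of its lines, applies Dey's bound, and takes the union of breakpoints over agents. Two wording slips do not affect the bound. First, the top-$p$ set changes \emph{only} at level vertices, not exactly at them, since a swap between ranks $p-1$ and $p$ is a level vertex that leaves the set unchanged. Second, between consecutive critical $\lambda$ the optimum is a single vertex of the curve; segments come from interpolating at the critical values. Replacing $p$ by $\min(p,m_i)$, so the $p$ dummy lines cannot inflate Dey's bound, is a point the paper leaves implicit.

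The gap is the running time, which is the substantive half of the theorem. It hinges on building each agent's $p$-level in $O(m_i\log m_i+m_ip^{1/3})$ time, or at least $O((m_i+K_i)\log m_i)$, and your kinetic two-heap sweep is not shown to do this, as you concede. The separation certificate is charged to level vertices. But each such vertex also costs a deletion and an insertion in the kinetic tournaments, each touching $\Theta(\log m_i)$ certificates with a priority-queue operation apiece. The tournaments' internal events are not charged to output at all. So the straightforward accounting gives $\log^2 m_i$ work per vertex, not $\log m_i$. The kinetic-heap $p$-level method the paper itself benchmarks is analysed at $O(m_i\log m_i+K_i\log^{1+\varepsilon}m_i)$. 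This overshoots the claimed $O((k+m)\log m)$ by a $\log^{\varepsilon}m$ factor when $K_i=\Theta(m_ip^{1/3})$.

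The missing ingredient is a $p$-level algorithm with a matching worst-case bound. The paper invokes the deterministic $O(m_i\log m_i+m_ip^{1/3})$ construction of \cite{ChanChengZheng2024}. Then all signature functions together cost $O(m\log m+mp^{1/3})$, and the final merge of $O(k)$ records dominates at $O(k\log m)$. If you want to keep a sweep, you could replace the kinetic heaps by a dynamic convex hull with $O(\log n)$ amortized updates and tangent queries (e.g.\ Brodal--Jacob). Each next crossing of the current $p$-th line with the envelopes above and below it is then a tangent query in the dual. That would give the output-sensitive $O((m_i+K_i)\log m_i)$ bound you were aiming for.
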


\begin{restatable}{theorem}{matroid}
    \label{thm:matroid}
    Consider an incentive allocation problem with a total of $m$ incentives. 
    If there is a matroid constraint on each agent, each matroid has rank at most $p$, and $k$ upperbounds the number of breakpoints on the trade-off curve, then $k=O(mp^{1/3})$, and the trade-off curve can be computed in $O(Tk+k\log m)$ time, where $T$ is the time to compute an optimum weight base.
\end{restatable}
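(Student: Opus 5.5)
Write the fractional problem as an LP and dualize the budget constraint with a multiplier $\lambda\ge 0$. For fixed $\lambda$ the Lagrangian splits over agents: agent $i$ maximizes $\sum_j (v_{ij}-\lambda c_{ij})x_{ij}$ over the independence polytope of its matroid $M_i$. By the greedy algorithm, the optimum is the maximum-weight independent set under weights $w_{ij}(\lambda)=v_{ij}-\lambda c_{ij}$, with negative-weight elements discarded.

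The profit function $f(B)$ is the LP value as a function of $B$. Concavity and piecewise linearity follow from LP parametric theory, since $f$ is the value of an LP whose right-hand side depends linearly on $B$. By LP duality, $f(B)=\min_{\lambda\ge0}\{\lambda B+\sum_i g_i(\lambda)\}$, where $g_i(\lambda)$ is agent $i$'s optimal greedy value. Hence the breakpoints of $f$ correspond exactly to the values of $\lambda$ at which the optimal independent set of some agent changes. Between two consecutive critical values of $\lambda$ the optimal set $S(\lambda)$ is constant. The curve is then the sequence of points $(c(S(\lambda)),v(S(\lambda)))$, joined by linear interpolation: at a critical $\lambda$ one can move fractionally from the set just before it to the set just after it.

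\textbf{Counting breakpoints.} The count is additive over agents, because agents are independent. For agent $i$ with $m_i$ elements and rank $\le p$, the optimal set changes only when the relative order of two weight lines $w_{ij}(\lambda)$ changes among the top-$p$ elements, or when a line crosses zero. This is the parametric matroid (max-weight base) problem with linearly varying weights. The known bound for it comes via the $k$-level problem, from Dey's bound $O(nk^{1/3})$ on the complexity of the $k$-level of $n$ lines, extended to matroids by Eppstein. It gives $O(m_i p^{1/3})$ base changes: add $p$ dummy zero-weight elements to model the independent-set version, or truncate at rank $p$. Summing over agents gives $k=O(\sum_i m_i p^{1/3})=O(mp^{1/3})$.

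\textbf{Computing the curve.} Maintain a priority queue keyed by the next critical $\lambda$ of each agent and sweep $\lambda$ from $\infty$ down to $0$, with $B$ increasing. At each event pop the minimum, apply the exchange to that agent's current base, update the totals (cost, value) to emit a breakpoint, and compute that agent's next event. For a general matroid, computing the next event requires recomputing an optimum base or exchange, costing $T$, plus $O(\log m)$ for the heap. This gives $O(Tk+k\log m)$. For the uniform matroid, the next event can be found from the two neighboring lines at the top-$p$ boundary in $O(1)$ amortized time with balanced trees, giving $O((k+m)\log m)$ including initialization. The dynamic claim follows because an update only touches one agent's events in the heap.

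\textbf{Main obstacle.} The hard step is the breakpoint bound: correctly reducing the parametric max-weight independent set with rank $\le p$ to the $k$-level bound, including handling the zero-crossing elements. The sweep and its running time are routine once events are well defined.
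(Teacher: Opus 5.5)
Your structural part agrees with the paper: dualize the budget, reduce each agent to a parametric max-weight base problem (with $p$ zero-value, zero-cost dummies and truncation to rank $p$), invoke the $O(m_ip^{1/3})$ bound and sum to get $k=O(mp^{1/3})$, and pay $O(k\log m)$ to aggregate. (Minor point: Dey's paper itself proves the parametric-matroid bound; Eppstein gives the matching lower bound.)

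The gap is the $O(Tk)$ term. Your sweep needs each agent's \emph{next} critical value after every event, and you charge $T$ for it. An optimum-base routine only tells you the optimal base at a $\lambda$ you supply. The next event below the current $\lambda_0$ is the largest crossing point $(v_f-v_e)/(c_f-c_e)$ over all exchange pairs $e\in S$, $f\notin S$ with $S-e+f$ a base and $c_f>c_e$. That value depends on the fundamental circuits of the current base $S$, which can take up to $p(m_i-p)$ exchange tests with an independence oracle. There is no evident way to get it from $O(1)$ optimum-base computations; parametric search, for instance, costs considerably more. So for a general matroid neither your per-event cost nor the resulting $O(Tk)$ is justified.

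The missing idea is to avoid computing next events at all. The paper runs the Eisner--Severance method on each signature function $f_i$ separately. Its only oracle is an optimum base at a \emph{given} $\lambda$ (ties broken toward the right piece), which costs $T$ and returns both $f_i(\lambda)$ and a supporting line. The method works as follows:
\begin{enumerate}
\item Start from the pieces at $\lambda=0$ and $\lambda\to\infty$, the latter being the best zero-cost independent set.
\item Repeatedly evaluate $f_i$ at the intersection of two adjacent known pieces.
\item Either $f_i$ agrees with both pieces there, which certifies a breakpoint, or the oracle returns a previously unknown piece.
\end{enumerate}
The $O(k_i)$ bound on oracle calls comes from this charging (each call certifies a breakpoint or discovers a piece), not from any single step being cheap. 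Summing over agents gives $O(Tk)$, and merging the slope-difference lists gives $O(k\log k)=O(k\log m)$. With this in place of your sweep, your heap only performs the merge and the argument is complete.

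Separately, your uniform-matroid aside is not needed for this statement, and it is also unsupported as written. The identity of the $p$-th line changes at every vertex of the $p$-level, and predicting the next such vertex is a kinetic-heap or dynamic-hull problem, not an $O(1)$-amortized balanced-tree update. The paper relies on the Chan--Cheng--Zheng $p$-level algorithm for that case.
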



\begin{restatable}{theorem}{updatethm}
    \label{thm:update}
    If the slope-difference form of trade-off curve after an update differs from previous trade-off curve at $t$ positions, then the update takes $O(t\log k)$ time, where $k$ is the total number of breakpoints in the curve.
\end{restatable}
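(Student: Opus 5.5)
The plan is to exploit the structure behind the whole paper. The global trade-off curve is the sum of per-agent concave piecewise-linear curves, and concave curves add by merging their edge multisets in slope order. Concretely, the global curve is described in \emph{slope-difference form}: it is the origin together with the multiset of segments $(\Delta c,\Delta v)$ from all agents, sorted by decreasing slope $\Delta v/\Delta c$. The curve is obtained by concatenating these segments.

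First I would establish this decomposition. Write the fractional problem's Lagrangian for budget multiplier $\lambda$. The problem then separates across agents, so for each $\lambda$ the optimal budget and value are sums of per-agent quantities. Hence the global curve is the upper concave envelope formed by the Minkowski sum of the per-agent concave curves, with breakpoints given by the multiset union of per-agent segments. This is the standard Minkowski-sum-of-convex-polygons fact applied to the concave hull of each agent's feasible (cost, value) region.

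Second, I would choose the data structure. Store all segments in a balanced binary search tree keyed by slope, with ties broken by agent id and segment index. Each node is augmented with the subtree sums of $\Delta c$ and $\Delta v$. Evaluating the curve at budget $B$ then amounts to descending the tree by prefix sums of $\Delta c$, which costs $O(\log k)$. The breakpoints are recovered by an in-order traversal with running sums. The tree holds exactly the $k$ breakpoints' segments, so each operation costs $O(\log k)$.

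Third, I would analyze an update. An update changes the incentives of one agent $i$, which replaces agent $i$'s segment multiset $S_i$ by $S_i'$ and leaves all other agents' segments untouched. The slope-difference form of the global curve therefore changes exactly by the symmetric difference $S_i\triangle S_i'$, whose size is $t$ by definition. We delete the segments in $S_i\setminus S_i'$ and insert those in $S_i'\setminus S_i$. Each operation is a balanced-BST update with augmentation maintenance, so the total is $O(t\log k)$.

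The main obstacle is making the accounting honest in two places. The first is cost: we must be able to identify $S_i\triangle S_i'$ without paying for recomputing agent $i$'s full curve, or else argue that this recomputation is charged separately (a per-agent cost depending only on the small number $M$ of that agent's incentives). I would state that the new per-agent curve is given, or is computed by the per-agent routine, and then diff the two sorted segment lists in time linear in their lengths. The second is degenerate collinear segments: merging equal-slope segments would make $t$ ill-defined. I would keep equal-slope segments as distinct keys and never merge them, so that $t$ is exactly the number of tree insertions plus deletions.
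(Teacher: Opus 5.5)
Your proposal is correct and is essentially the paper's argument. The paper stores $f=\sum_i f_i$ in slope-difference form inside a prefix-sum structure and, when agent $i$ is updated, replaces $f_i$ by the new signature $g_i$, touching only the $t$ changed entries at $O(\log k)$ each; it charges the recomputation of $g_i$ separately, exactly as you propose. Your slope-keyed multiset of segments $(\Delta c,\Delta v)$ is the same data read in the primal: a breakpoint $(\lambda_j,\Delta_j)$ of $f$ is a segment of $\tau$ with slope $\lambda_j$ and budget-length $\Delta_j$, so your Minkowski-sum decomposition plays the role of \Cref{thm:lps}. One small correction: when zero-cost incentives exist, the curve starts at $(0,\tau(0))$, with $\tau(0)$ the sum of the agents' zero-budget values, rather than at the origin.
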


Assuming each agent is only available for a few hundred incentives, then each update of an agent, $t$ would be around the same number, which would make the running time near-real time.  

\section{Preliminaries}\label{sec:prelim}


We define $[n]=\{1,\ldots,n\}$. Let $x\in \R^m$, if $I\subseteq [m]$, then $x_{I}$ is the vector of length $|I|$ obtained by deleting elements outside the index set. $x(I) = \sum_{i\in I} x_i$. $\operatorname{Conv}(X)$ is the convex hull of $X$. 

\subsection{Prefix Sum and Piecewise Linear Convex Function Representations}
Given a sequence of elements $a_1,\ldots,a_n$ and some associative operation $\oplus$, the prefix sum is the sequence $b_1,\ldots,b_n$, such that $b_1=a_1$, and $b_i = b_{i-1} \oplus a_i$. The prefix sum data structure maintains the corresponding prefix sums under updates of the original sequence, allowing query of each prefix sum value and binary search (if monotonic) in $O(\log n)$ time \cite{blelloch1990}. 

Let $f:[0,\infty) \to \R$ be a piecewise linear convex function with $n$ breakpoints ($0$ is always a breakpoint). There are 3 different forms that capture almost all information of $f$, and one can transform between them using prefix sum or even easier operations.

\begin{enumerate}
    \item The slope-difference form $SD(f) = \{(x_1,\Delta_1),\ldots,\allowbreak (x_n,\Delta_n)\}$, where $x_1=0$, $\Delta_1$ is the left most slope of $f$, and $x_i$ is the $i$th breakpoint, and $\Delta_i$ for $i>1$ is the difference between the right slope and the left slope.
    \item The slope form $S(f) = \{(x_1,s_1),\ldots,(x_n,s_n)\}$. Again, $x_i$ are the breakpoints, and $s_i=\sum_{j=1}^i \Delta_j$ is the right slope at point $x_i$.
    \item The value form $V(f) = \{(x_1,f(x_1)-f(0)),\ldots, \allowbreak (x_n,f(x_n)-f(0))\}$.
\end{enumerate}

Note that the previous forms also require the value of $f(0)$ in order to uniquely recover the function, hence it has to be stored elsewhere. 
One can write the prefix sum data structure by hand, such that the original sequence is the slope-difference form, and any update in slope-difference form would propagate to slope and value form.

The slope-difference form is also easy for sums. $SD(f+g)$ is simply $SD(f)\cup SD(g)$ if $f$ and $g$ do not share breakpoints, otherwise, sum the slope-difference at the breakpoint. For simplicity of exposition, we assume the functions we sum do not share breakpoints. This also allows one to maintain $f=\sum_{i} f_i$ easily by taking the union.

By maintaining a function, we means that the following question can be answered quickly
\begin{enumerate}\label{enum:operations}
\item Evaluate: Given $x$, return $f(x)$.
\item Inverse: Given $y$, find smallest $x$ such that $f(x)=y$.
\item Output: Given $x$ and $y$, output the function $f$ restricted on $[x,y]$.
\end{enumerate}

If $f$ is the trade-off curve, then ``Evaluate'' can answer how much value can be obtained for a given budget, and ``Inverse'' can answer how much budget is required for a particular value.

\subsection{Matroids}

A matroid $M=(E,\mathcal{I})$ is a set system over ground set $E$, and $\mathcal{I}$ consists of subsets of $E$, such that the following properties hold. 

\begin{enumerate}
\item $\emptyset\in \mathcal{I}$.
\item $A\in \mathcal{I}$, then every subset of $A$ is in $\mathcal{I}$.
\item If $A,B\in \mathcal{I}$, and $|A|>|B|$, then there is $x\in A\setminus B$, such that $B\cup \{x\}\in \mathcal{I}$. 
\end{enumerate}

The sets in $\mathcal{I}$ are called \emph{independent sets}, and the maximal independent sets are called \emph{bases}. The \emph{rank} function $r$ associated with $M$ is defined as $r(S) =\max\{ |S'|  \mid S'\subseteq S ,S' \in \mathcal{I}\}$, the size of the largest independent set contained in $S$. The rank of the matroid is defined as $r(E)$. 

A matroid is a $p$-uniform matroid if there exists an integer $p$, such that a set is independent if and only if its size is at most $p$. 

\subsection{Problem and Properties}
Consider an (integral) incentive allocation problem with $n$ agents. The $i$th agent has a candidate set of incentives, $E_i$. Each incentive $e$ has a nonnegative cost $c_e$ and a nonnegative value $v_e$, respectively. To model constraints, let $\mathcal{F}_i$ be the feasible subsets of $E_i$, which can be encoded as a set of binary vectors. We assume $B\geq 0$ and $\emptyset\in\mathcal{F}_i$, and omit agents with no incentives. Let $m_i = |E_i|$ and $m=\sum_{i} m_i$. The problem is to choose a feasible set of incentives for each agent, such that the sum of value of all the chosen incentives is maximized while the total cost does not exceed budget $B$.

The (integral) incentive allocation problem can be formulated as the following integer program ($IP$):

\[
\begin{aligned}
    \max_x \quad v \cdot x \\
    s.t.   \quad c \cdot x &\leq B & \\
    x_{E_i}&\in \mathcal{F}_i &\forall i\in [n]\\
    x&\in \{0,1\}^m
\end{aligned}
\]

Define $\ips(B)$ to be the objective value of the above integer program. The exact trade-off curve is the function $\ips$ as $B$ ranges from $0$ to $\infty$. Finding $\ips(B)$ is NP-hard, therefore we consider the linear programming relaxation instead. This is shown below. 

\begin{equation}
    \label{eq:generallp}
\begin{aligned}
    \max_x \quad v \cdot x \\
    s.t.   \quad c \cdot x &\leq B & \\
    x_{E_i}&\in \operatorname{Conv}(\mathcal{F}_i) &\forall i\in [n]\\
\end{aligned}
\end{equation}

Define $\lps(B)$ to be the objective value of the linear programming relaxation of the integer program $IP$. We call $\lps$ the (fractional) \emph{trade-off curve}.

One can reduce the problem to multiple choice knapsack similar to the reduction by Pisinger \shortcite{DavidPisinger}, and show $\lps(B)-\ips(B) \leq \max_i\{ \sum_{e\in E_i} v_e\}$. 
That is, the maximum difference is at most the value a single agent can provide. If additionally, we know $\mathcal{F}_i$ forms a matroid for each $i$, then a stronger result exists: the difference is at most the value of a single incentive \cite{CAMERINI1984157}. Namely, $\lps(B)-\ips(B) \leq \|v\|_\infty$.

Because in large-scale problems such as coupon assignment, single agent's value is \emph{small} compared to the objective. Therefore, $\lps$ is a very close approximation of $\ips$. Hence, our work is to maintain the function $\lps$.

\section{Algorithm}\label{sec:alg}
The algorithm is conceptually simple. The computation gets broken into two independent parts, allowing for greater parallelization and customization. 

The idea is to compute a signature function for each agent. The signature functions can be computed in parallel, completely independently.
The sum of the signature functions is the function we will maintain, and we show how to use the sum to obtain the desired information on $\lps$.

\subsection{From Signature Functions to Trade-Off Curve}
\label{sec:sigf}

We start with the most general form of the problem \Cref{eq:generallp}. Let $P_i = \operatorname{Conv(\mathcal{F}_i)}$. Consider we have $n$ polyhedrons $P_1,\ldots,P_n$ together with disjoint index sets $E_1,\ldots,E_n$ with their union $[m]$.

Consider the following linear program, 
\[
    \begin{aligned}
    \max_x \quad v \cdot x \\
    s.t.   \quad c \cdot x &\leq B & \\
    x_{E_i} &\in P_i & \forall i\in[n]\\
    \end{aligned}
\]

For $\lambda\geq 0$, we define $f_i(\lambda) = \max\{(v_{E_i}-\lambda c_{E_i}) x | x\in P_i \}$, and we call it the \emph{signature function} of agent $i$. The signature function $f_i$ is piecewise linear and convex since it is the upper envelope of line arrangement $\{l_x(\lambda)=v_{E_i}\cdot x-\lambda c_{E_i}\cdot x | \forall x\in P_i\}$.

Let $f = \sum_{i} f_i$. 
The Lagrangian dual of the linear program is therefore

\begin{equation}
\label{eq:Lagrangiandual}
\begin{aligned}
\min_{\lambda\geq 0} \left( B\lambda+f(\lambda)\right).
\end{aligned}
\end{equation}

Note that each $f_i$ is a piecewise linear convex function, hence $\lambda B + f(\lambda)$ is also piecewise linear and convex.

Given the signature function $f_i$ for each agent, we have to maintain the function $f=\sum_{i} f_i$. Maintaining $f$ itself is an easy task since it is just the sum of piecewise linear functions, the number of breakpoints is at most the total number of breakpoints for $f_i$. If each $f_i$ is stored in slope-difference form, then $f$ can be computed through a simple merge of the lists.

\begin{theorem}\label{thm:lps}
    For $B\geq 0$, $\tau$ is a piecewise linear concave function and $\tau(B) = \min_{\lambda\geq 0} \{\lambda B+f(\lambda)\}$.
\end{theorem}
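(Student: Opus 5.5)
The plan is to prove the identity $\tau(B)=\min_{\lambda\ge 0}\{\lambda B+f(\lambda)\}$ first, via linear programming duality. The shape properties (piecewise linear, concave) then follow from the right-hand side. Throughout, I assume each $P_i=\operatorname{Conv}(\mathcal{F}_i)$ is a polytope: it is the convex hull of finitely many binary vectors, and it is nonempty since $\emptyset\in\mathcal{F}_i$.

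\emph{Weak duality.} Fix $B\ge 0$. For any $\lambda\ge 0$ and any feasible $x$ (so $c\cdot x\le B$ and $x_{E_i}\in P_i$), we have
\[
v\cdot x \le v\cdot x+\lambda(B-c\cdot x) = \lambda B+\sum_i (v_{E_i}-\lambda c_{E_i})\cdot x_{E_i} \le \lambda B+\sum_i f_i(\lambda).
\]
Hence $\tau(B)\le \inf_{\lambda\ge0}\{\lambda B+f(\lambda)\}$. Feasibility of the LP is guaranteed because $x=0\in P_i$ for each $i$, and $c\cdot 0=0\le B$.

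\emph{Strong duality.} This is the main step. Write each polytope as $P_i=\{y: A_i y\le b_i\}$. The primal then becomes the finite LP
\[
\max\{v\cdot x : c\cdot x\le B,\ A_i x_{E_i}\le b_i\ \forall i\}.
\]
It is feasible and bounded, since the $P_i$ are bounded. By LP strong duality there exist optimal duals $\lambda^*\ge 0$ for the budget row and $\mu_i\ge0$ for the other rows, with $\lambda^* B+\sum_i \mu_i\cdot b_i=\tau(B)$ and $A_i^\top\mu_i = v_{E_i}-\lambda^* c_{E_i}$. Applying LP duality again to the inner problem defining $f_i(\lambda^*)$ gives $f_i(\lambda^*)\le \mu_i\cdot b_i$. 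So $\lambda^*B+f(\lambda^*)\le\tau(B)$, and the infimum is attained and equals $\tau(B)$.

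An alternative route avoids explicit facet descriptions. Because $f$ is piecewise linear convex with finitely many pieces (each $f_i$ is a max over the finitely many vertices of $P_i$), one can use the subgradient optimality condition at a minimizer $\lambda^*$. This yields a convex combination of optimal vertex solutions whose cost equals $B$, or, when $\lambda^*=0$, whose cost is at most $B$; complementary slackness then gives equality. I would use the LP-duality version as the cleaner argument.

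\emph{Shape.} Each $f_i$ is the maximum of finitely many affine functions of $\lambda$, one for each vertex of $P_i$, so $f$ is piecewise linear with finitely many breakpoints. For fixed $\lambda$, the map $B\mapsto \lambda B+f(\lambda)$ is affine. So $\tau$ is a pointwise minimum of affine functions of $B$, hence concave. For piecewise linearity, the minimum over $\lambda\ge0$ of the convex piecewise linear function $\lambda B+f(\lambda)$ is attained either at $\lambda=0$ or at one of the finitely many breakpoints of $f$. A minimum can sit at an unbounded linear piece only if its slope is zero, and then it is also attained at the piece's endpoint. Therefore $\tau(B)=\min_{\lambda\in\Lambda}\{\lambda B+f(\lambda)\}$ for a fixed finite set $\Lambda$, which is a minimum of finitely many affine functions, hence piecewise linear and concave.

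The expected obstacle is justifying strong duality cleanly, in particular that the Lagrange multiplier for the single budget constraint alone recovers the optimum. This is exactly the finite-LP duality argument above; it relies on the $P_i$ being polytopes, which holds since $\mathcal{F}_i$ is finite.
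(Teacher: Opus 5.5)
Your proof is correct and takes essentially the paper's route: the identity is zero-gap Lagrangian duality for the single budget constraint, which you justify via finite LP strong duality (plus weak duality for the inner problems defining $f_i(\lambda^*)$). Concavity and piecewise linearity then follow by writing $\tau$ as a lower envelope of the affine maps $B\mapsto \lambda B+f(\lambda)$, restricted to the finitely many candidates $\lambda\in\{0\}\cup\{\text{breakpoints of } f\}$.
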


This shows once we have the signature functions, the trade-off curve is easy to compute through common techniques for manipulating piecewise linear functions. See the technical appendix for the full proof.

We have already established that $\lps$ is closely related to $f$. Next, we show how to maintain $\lps$ dynamically.

Assume $f$ has $k$ breakpoints and all three forms(value, slope and slope-difference) of $f$ are given. We answer the following questions.

\begin{enumerate}
    \item Evaluate: For a fixed $B$, how to find $\lps(B)$? 
    \item Inverse: Find a $B$ such that $\lps(B)=y$.
    \item Update: Maintain $\lps$ after a single agent's incentive changes. 
    \item Output: Output a contiguous piece of $\lps$.
\end{enumerate}

\paragraph{Evaluate.} $B\lambda + f(\lambda)$ is a piecewise linear convex function, the minimum is at the first breakpoint where the right slope is nonnegative. Or in other words, find the first $\lambda$ in $f$, where the right slope is at least $-B$. This can be processed easily by looking at the slope form of $f$ and do a binary search, and it would take $O(\log k)$ time.

\paragraph{Inverse.} Report no solution for $y\notin[\lps(0),f(0)]$; return $0$ for $y=\lps(0)$ and $-f'_+(0)$ for $y=f(0)>\lps(0)$. For an interior target, the idea is to find $\lps(B_1)\leq y< \lps(B_2)$, such that $B_1$ and $B_2$ are the negatives of the right and left slopes at a positive breakpoint of $f$, respectively. We can do binary search over the breakpoint of $f$ to find the corresponding $\lambda_1$ for $B_1$. Finally, solve the linear equation $B\lambda_1 + f(\lambda_1) = y$ to obtain $B$.

\paragraph{Update.}
Assume information for agent $i$ updates, then the only change is the signature function for that agent. Assume the new signature function after the update is $g_i$. The new $f$ is obtained by subtracting $f_i$ and adding $g_i$ successively. We could also save time by only updating the difference in $f_i$ and $g_i$. Hence, the amount of time spent on update is bounded by the number of breakpoint changes times a log factor, and the time finding the new signature function.

Note that the function is stored in slope-difference form; hence, any update in which only $t$ positions in the slope-difference form change takes $O(t\log k)$ time, which proves \Cref{thm:update}.

\paragraph{Output $\lps$.}
Evaluating and finding the inverse are sufficient for most purposes, but if we do want to output a contiguous piece of $\lps$ that consists of at most $t$ breakpoints, we can do so in $O(\log k+t)$ time. We know precisely for which $B$ the curve $\lps$ changes in slope: a one to one correspondence with the breakpoints of $f$. Hence, we can first find the desired place in $\lps$ and output the breakpoints one by one by walking through the slope table of $f$.

\begin{theorem}\label{thm:outeralgorithm}
Given the signature functions for each agent, it takes $O(k\log k)$ time to compute a representation for $\lps$, where $k$ bounds the number of breakpoints in the trade-off curve.
\end{theorem}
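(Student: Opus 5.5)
Since each signature function $f_i$ is given, we first put each $f_i$ into slope-difference form $SD(f_i)$, listing its breakpoints together with the slope jumps. We then form $SD(f)$ for $f=\sum_i f_i$. As noted in the preliminaries, $SD(f)$ is just the union $\bigcup_i SD(f_i)$. If two agents share a breakpoint, we add their slope-differences at that point. The total number of breakpoints of $f$ is at most the sum of the breakpoint counts of the $f_i$. By \Cref{thm:lps} and the one-to-one correspondence noted in the Output paragraph, this sum is bounded by $k$ up to constants. We sort this multiset of at most $k$ pairs by $\lambda$ and merge entries with equal keys in one linear pass. The sorting costs $O(k\log k)$. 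A $\log n$-way heap merge of the already sorted per-agent lists would also work, but plain sorting is simpler and within the bound.

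Next we run a prefix-sum pass over the sorted $SD(f)$. This gives the slope form $S(f)$, with $s_j=\sum_{\ell\le j}\Delta_\ell$. A second pass accumulates $s_j(x_{j+1}-x_j)$ and gives the value form $V(f)$. We store $f(0)=\sum_i f_i(0)$ separately, which takes $O(n)\subseteq O(k)$ time. Each pass is $O(k)$. If dynamic updates are wanted, we build the prefix-sum data structure of \cite{blelloch1990} over the sorted sequence instead; this is also $O(k)$ to build, or $O(k\log k)$ with a balanced tree.

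Finally, we argue that these three arrays are a representation of $\lps$. By \Cref{thm:lps}, $\lps(B)=\min_{\lambda\ge 0}\{\lambda B+f(\lambda)\}$, so $\lps$ is, up to sign conventions, the concave conjugate of $f$. Its breakpoints in $B$ are exactly the values $-s_j$, the negated slopes of $f$, and on each interval between consecutive such values the minimizer is the corresponding breakpoint $x_j$ of $f$. Hence the pieces of $\lps$ are the lines $B\mapsto x_jB+f(x_j)$, and all of them can be read off from $S(f)$ and $V(f)$. Evaluate and Inverse then follow by binary search as described above.

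The main obstacle I expect is bookkeeping rather than depth. One point is to confirm that $k$, defined as a bound on breakpoints of the trade-off curve, also bounds the total size of the per-agent lists, so that the sort is on $O(k)$ items. The other is handling coinciding breakpoints across agents and the endpoint $\lambda=0$ correctly, so that the conjugate correspondence is exact.
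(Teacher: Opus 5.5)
Your proposal follows the paper's argument. You merge the per-agent slope-difference lists into $SD(f)$ in $O(k\log k)$ time, obtain the slope and value forms by prefix sums, and read the pieces of $\lps$ off the breakpoints and slopes of $f$ via \Cref{thm:lps}, which is the correspondence the paper uses in its Output paragraph.

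The one step whose justification needs replacing is the claim that the merged input has $O(k)$ entries. \Cref{thm:lps} and that correspondence relate breakpoints of $\lps$ only to the \emph{distinct} breakpoints of $f$. You are sorting the multiset of all per-agent breakpoints, which can be much larger: every agent contributes $\lambda=0$, and many agents with identical signature functions inflate the input without adding any breakpoint to $f$. The bound, and likewise your $O(n)\subseteq O(k)$, comes instead from the paper's convention that $k$ bounds the total number of signature breakpoints. The paper states this as the simplifying assumption that summed functions share no breakpoints, and uses it that way in the proofs of \Cref{thm:cardinality} and \Cref{thm:matroid}.
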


The running time in \Cref{thm:outeralgorithm} is obtained by merging signature functions of agents. The complexity of maintaining the trade-off curve depends on how many breakpoints there are in $f$, which in turn is linearly related to the number of breakpoints in each signature function. The next step is to bound the number of breakpoints in the signature functions, and the time to compute it.

Because $f$ decomposes as the sum of signature functions, we only have to focus on a single agent. So from this point on, we only consider the signature function for a single agent.

\subsection{General Signature Function}

In the most general case, we would define the signature function $f(\lambda)$ to be the optimum of $\max_x \{(v-\lambda c)\cdot x | x\in P\}$, where $x$ is an $m$ dimensional vector. This is the general parametric linear program. The number of breakpoints in $f$ can be exponentially large, namely $\Omega(2^{\sqrt{m}})$ \cite{Zadeh73b,Murty80,Carstensen83}.

However, if the constraints in the question are matroids, the number of breakpoints is reasonably small, and can be computed quickly. To start, we focus on the cardinality constrained case.

\subsection{Cardinality Constraint}

Consider an agent who has $m$ incentives $E$, and at most $p$ of them can be allocated to the agent. The signature function is $f(\lambda) = \max \{(v-\lambda c)\cdot x | \mathbf{1}\cdot x \leq p, 0\leq x\leq 1\}$. For ease of manipulation later, we actually want equality. That is, the agent gets exactly $p$ incentives. Indeed, we can add $p$ dummy incentives with $0$ value and $0$ cost. Pisinger observed the number of possible slopes is upper bounded by $m^2$, hence showing $f$ have at most $O(m^2)$ breakpoints \shortcite{DavidPisinger}. 

We use techniques from computational geometry to view this problem. Consider an arrangement of lines $\{\ell_e \mid e\in E\}$, where $\ell_e(\lambda) = v_e-\lambda c_e$ for $e\in E$. $f(\lambda)$ is the sum of $p$ top most lines when $x$ coordinate is $\lambda$. Therefore, in order to find $f$, it is sufficient to find the top $p$ lines in the arrangement for each $\lambda$. 
 
The simple brute force method is to first find all intersections of the lines, and sort them by $x$ coordinate. In between each two consecutive intersections, the top $p$ lines cannot change. So we calculate the top $p$ lines on all $x$ intervals formed by two consecutive intersections. 
Because there are $O(m^2)$ intersections, the number of breakpoints is also $O(m^2)$, which gives an alternative way to show Pisinger's bound. The bound is very loose, and next we show how the geometric view can improve the bound. The set of points that is the top $p$th point in an arrangement of lines is called the $p$-level \cite{ERDOS1973139,lovasz}. $1$-level is the upper envelope of the lines, which is the boundary of a convex space. However, for $p>1$, it is not necessarily convex, see \Cref{fig:2level}. The $p$-level can be constructed deterministically in $O(m\log m+mp^{1/3})$ time \cite{ChanChengZheng2024}. Observe that the slope of the signature function $f$ can only change at the breakpoint of the $p$-level. Indeed, even when there are many line intersections above the $p$-level, the top $p$ lines does not change, hence the sum would not change. The current best upper bound on the number of breakpoints of $p$-level is $O(mp^{1/3})$ \cite{Dey1998}. Together, it reflects $f$ has $O(mp^{1/3})$ breakpoints and can be computed in the same time as computing the $p$-level.

\begin{theorem}\label{thm:singlepsignature}
A signature function for $p\geq 1$ cardinality constrained incentive allocation can be computed in $O(m\log m+mp^{1/3})$ time. It has at most $O(mp^{1/3})$ breakpoints.
\end{theorem}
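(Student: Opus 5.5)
The plan is to reduce the computation of the signature function to constructing the $p$-level of the line arrangement $\{\ell_e \mid e\in E\}$, with $\ell_e(\lambda)=v_e-\lambda c_e$, after first padding $E$ with $p$ dummy incentives of value and cost $0$. The dummies turn the constraint into $\mathbf{1}\cdot x = p$ without changing $f$, since all values are nonnegative in the relevant sense: a dummy line is $\ell\equiv 0$, and choosing it is the same as choosing nothing. The padded instance has $m+p\le 2m$ lines when $p\le m$; if $p>m$, we simply cap $p$ at $m$, because the constraint is then vacuous. The key structural fact is this: for fixed $\lambda$, the polytope $\{0\le x\le 1,\ \mathbf{1}\cdot x=p\}$ is integral, with vertices the $p$-subsets, so $f(\lambda)$ equals the sum of the $p$ largest values $\ell_e(\lambda)$. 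I will state this via the greedy argument, or the standard fact that the uniform matroid base polytope has these vertices.

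Next I show that the breakpoints of $f$ are contained in the $x$-coordinates of the vertices of the $p$-level. Take an open interval of $\lambda$ containing no vertex of the $p$-level. On it a single line $\ell_{e^*}$ realizes the $p$-level. The set of lines strictly above it can change only when two lines cross. A crossing strictly above the $p$-level swaps two lines that are both in the top $p$, so the set is unchanged. A crossing strictly below the level swaps two lines that are both outside the top set. A crossing on the level itself is a vertex of the level, which we excluded. Hence the top-$p$ set is constant on the interval, and $f$ is a sum of fixed lines there, so it is linear. Degenerate ties, such as several lines meeting at one point or parallel lines, only merge breakpoints and never create new ones, and I would handle them by a perturbation remark.

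Invoking the bound of \cite{Dey1998}, the $p$-level has $O(mp^{1/3})$ vertices, which gives the breakpoint bound. For the algorithm I run \cite{ChanChengZheng2024} to obtain the $p$-level as a sequence of edges, sorted by $\lambda$, in $O(m\log m+mp^{1/3})$ time. At $\lambda=0$, or at $\lambda\to-\infty$ to the left of all vertices, I compute the top-$p$ set and its sum of values and sum of costs in $O(m\log m)$ by sorting. I then sweep the level's vertices from left to right. At each vertex the level switches from line $a$ to line $b$. If $b$ comes from above, then $a$ enters the top set and $b$ leaves; if $b$ comes from below, nothing changes in the top set. In either case the slope of $f$, namely minus the sum of costs of the top lines, and its value are updated in $O(1)$. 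The whole sweep costs $O(mp^{1/3})$ time and directly produces the slope-difference form.

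The main obstacle is this sweep step. I have to argue rigorously that each level vertex changes the top-$p$ membership by at most one swap, which can be read off locally, and that no other arrangement crossings need to be examined. Without this, the running time would revert to the $O(m^2)$ brute force. I would settle it using the case analysis above on which of the two crossing lines lies above the level, with general position assumed and a symbolic perturbation covering degeneracies.
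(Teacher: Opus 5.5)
Your proposal is correct and follows the paper's own route: pad with $p$ zero dummies, identify $f(\lambda)$ with the sum of the top $p$ lines, observe that $f$ can bend only at vertices of the $p$-level, then use Dey's bound for the breakpoint count and the Chan--Cheng--Zheng construction for the running time; your left-to-right sweep spells out the level-to-$f$ conversion that the paper leaves implicit. One wording fix: your swap rule is correct for the $p-1$ lines strictly above the level, but for the top-$p$ set, whose costs give the slope of $f$, the two cases are reversed (a vertex where $b$ comes from above leaves the top-$p$ set unchanged, while $b$ coming from below swaps $a$ out for $b$ and is exactly a breakpoint of $f$, with slope jump $c_a-c_b>0$).
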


The true upper bound for number of breakpoints in $p$-level might be much smaller. The currently known lower bound is only $m 2^{\Omega(\sqrt{\log p})}$ \cite{Toth01}. Any improvement in the upper bound implies a better bound on the complexity of the trade-off curve.
\begin{figure}
    \centering
\includegraphics[width=.35\textwidth]{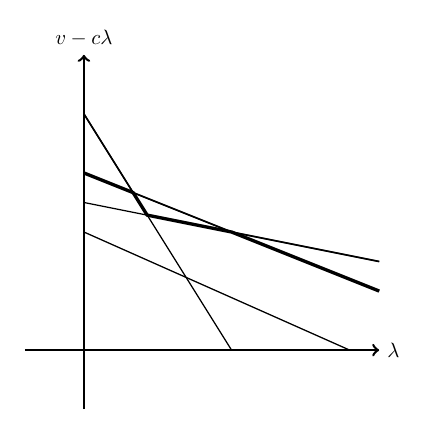}
\caption{The bold line forms a $2$-level in the line arrangement.}
\label{fig:2level}
\end{figure}
\subsection{Matroid Constraint} \label{matroid case}

The agent must be assigned an independent set of incentives in a matroid over ground set $E$. Let $|E|=m$, $r$ is the rank function, and $p=r(E)$ is the rank of the matroid.

Using standard knowledge from matroid theory \cite{Schrijver2003}, $f(\lambda)$ would be defined as the optimum of the following linear program, after adjoining $p$ free dummy items with 0 cost and 0 value and truncating to rank $p$; $E$ and $r$ below refer to this augmented matroid.
\begin{align*}
\max_x \quad & (v-\lambda c)\cdot x\\
s.t. \quad 
x(S)&\leq r(S)  \quad\forall S\subseteq E\\
x(E)&=p\\
x&\geq 0
\end{align*}

For a fixed $\lambda$, this LP is finding the optimum weight base in a matroid, where the weight is $w(e)=v_e-\lambda c_e$. 
Breakpoints on $f(\lambda)$ indicate that the matroid's optimum base changes due to the linear change in weights.
The number of breakpoints on $f$ is bounded by $O(mp^{1/3})$ \cite{Dey1998}. 
Unfortunately, unlike the cardinality case, there are matroids forcing $\Omega(mp^{1/3})$ breakpoints on the signature function \cite{Eppstein98}.


To compute $f$, we need to find all breakpoints on $f$. However, for matroid constraints there is no existing efficient algorithm finding breakpoints on the signature functions for general matroids as the $k$-level algorithm for cardinality constraints. 

Similar problem under graphic matroid has been studied in \cite{agarwal_parametric_1998}. The authors use parametric search and sparsification techniques to find breakpoints efficiently. 
The techniques are limited to graphs and cannot be applied to general matroids.
We achieve a running time of $O(Tmp^{1/3})$ for general matroids by using the Eisner-Severance method, where $T$ is the time complexity of finding optimum weight base in a matroid.

Eisner-Severance method is a simple algorithm for finding breakpoints on convex piecewise linear functions \cite{eisner_mathematical_1976}.
Given any piecewise linear convex function $f:[0,\infty)\to\R$ with $k$ breakpoints and an oracle which computes $f(\lambda)$ and arbitrary tangent line of $f$ to the right of $\lambda$.
ES method finds all breakpoints on $f$ with $O(k)$ oracle calls. The method is as follows.
Let $\{l_1,\ldots,l_q\}$ be the complete ordered list of affine pieces of $f$; we maintain a discovered subsequence $L$.
Initially, the sequence $L=\{l_1,l_q\}$ contains the leftmost and rightmost segments, found, for our signatures, at $0$ and by optimizing over zero-cost independent sets, respectively. If they coincide, the algorithm stops.
Denote by $\Lambda$ the list of intersections of adjacent lines in $L$.
ES method works by repeatedly adding line segments to $L$. In each iteration we check one intersection $\lambda_i\in\Lambda$ and evaluate $f(\lambda_i)$. Suppose $\lambda_i$ is the intersection of two adjacent stored line segments $l_a$ and $l_b$.
For consecutive pieces $l_t,l_{t+1}$ in the complete list, $\lambda$ is their common breakpoint if and only if $f(\lambda)=l_t(\lambda)=l_{t+1}(\lambda)$. For the partial list $L$, the test $f(\lambda)=l_a(\lambda)=l_b(\lambda)$ certifies that the stored pieces are consecutive in the complete list. If this equality holds, we remove $\lambda$ from list $\Lambda$; Otherwise, there exists a new line segment $l_p$ that attains $f(\lambda)$ and is not yet in $L$ and can be found using the oracle. We insert $l_p$ to $L$ and replace the tested intersection in $\Lambda$ by its intersections with adjacent lines. The algorithm terminates when $\Lambda=\emptyset$. The correctness of the algorithm is ensured by the correctness of the ES method.

Each intersection added to $\Lambda$ gives us a breakpoint or a new line segment. Thus the total number of evaluations of $f$ is $O(k)$, where $k=O(mp^{1/3})$ is the number of breakpoints.

For finding $l_p$ and evaluate $f(\lambda)$, we need to find the optimal weight base which takes $O(T)$. Thus the total time complexity of computing signature function for one agent with $m$ incentives is $O(Tm p^{1/3})$.



Our algorithm also leads to a simple proof that the optimal solution has at most two fractional variables. This fact can also be deduced from the proof of the integrality gap \cite{CAMERINI1984157}.
\begin{theorem}\label{thm:2frac}
There exists an optimal solution to \Cref{eq:generallp} under matroid constraints with at most $2$ fractional variables.
\end{theorem}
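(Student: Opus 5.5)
The plan is to use the Lagrangian characterization of \Cref{thm:lps} together with the breakpoint structure of the signature functions. Fix $B\ge 0$ and let $\lambda^*$ be a minimizer of $\lambda B+f(\lambda)$. If $\lambda^*$ is not a breakpoint of $f$, then $f$ is differentiable there and the slope equals $-B$. Each $f_i$ is then linear near $\lambda^*$, and its optimum is attained by an integral base $x^i$ (an optimal base of the augmented matroid for weights $v-\lambda^* c$). Concatenating these gives an integral $x$. By the standard LP duality and complementary slackness argument behind \Cref{thm:lps}, this $x$ has $c\cdot x=-f'(\lambda^*)=B$ (or $\lambda^*=0$ with $c\cdot x\le B$) and value $\lps(B)$. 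So $x$ has no fractional variables at all.

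The interesting case is when $\lambda^*$ is a breakpoint, with $-f'_+(\lambda^*)\le B\le -f'_-(\lambda^*)$. For each agent $i$ I take two integral optimal bases for weight $v-\lambda^*c$: $x^{i,-}$, optimal just left of $\lambda^*$, and $x^{i,+}$, optimal just right of it. Their costs are the negatives of the left and right slopes of $f_i$. Both are optimal at $\lambda^*$. Every convex combination $\theta x^-+(1-\theta)x^+$ of the concatenated vectors is Lagrangian-optimal, so choosing $\theta$ to make the cost exactly $B$ yields an optimal LP solution. That solution is fractional only on the coordinates where $x^-$ and $x^+$ differ. The key step, and the main obstacle, is to make $x^-$ and $x^+$ differ in exactly one exchange (one element out, one in). For a single generic perturbation this is the classical fact that optimal bases of adjacent linear weightings differ by a single swap. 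In general, however, many agents may have breakpoints at $\lambda^*$, and one agent may even require several swaps.

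To handle this, I will not jump from $x^-$ to $x^+$ directly. Instead I interpolate through a chain of integral bases $x^-=y^0,y^1,\dots,y^s=x^+$, all optimal at $\lambda^*$, where consecutive members differ by a single exchange $y^{j+1}=y^j-e+e'$. Such a chain exists within one agent because optimal bases for a fixed weight form the bases of a matroid (greedy with ties), so the symmetric exchange property lets us walk between them. Across agents, I simply change one agent at a time. Along this chain the cost $c\cdot y^j$ moves from $c\cdot x^-\ge B$ to $c\cdot x^+\le B$, so some consecutive pair satisfies $c\cdot y^j\ge B\ge c\cdot y^{j+1}$. A convex combination of that pair has cost exactly $B$. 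It is optimal, since every $y^j$ attains the Lagrangian optimum at $\lambda^*$ and the cost matches $B$, giving value $\lambda^*B+f(\lambda^*)=\lps(B)$. Its only fractional coordinates are $e$ and $e'$. If either is a dummy element, deleting the dummies leaves at most one fractional variable in the original problem. This gives at most $2$ fractional variables.
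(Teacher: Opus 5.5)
Your proof is correct and follows essentially the paper's route: take the optimal multiplier $\lambda^*$ from \Cref{thm:lps} and write $\tau(B)$ as a convex combination of integral solutions that are optimal at $\lambda^*$, drawn from the two sides of the breakpoint. Your chain of single exchanges inside the matroid of $\lambda^*$-optimal bases, changing one agent at a time, is what keeps it to two fractional coordinates when several agents or several swaps coincide at $\lambda^*$, so you need no non-degeneracy assumption. One point to tighten: when $\lambda^*=0$, choose for each agent a base that stays optimal on some $[0,\epsilon)$, which has cost $-f'_{i,+}(0)$ and so total cost at most $B$; an arbitrary optimal base at $\lambda=0$ can cost more than $B$ (e.g.\ when all values are $0$).
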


See the technical appendix for the proof.

\subsection{Wrapping Up}

Combining \Cref{thm:singlepsignature} and \Cref{thm:outeralgorithm}, we obtain the desired theorems.
\cardinality*
\begin{proof}
Assume the $i$th agent has $m_i$ choices of incentives, with $\sum_i m_i=m$. By \Cref{thm:singlepsignature}, computing all signature functions takes $O(\sum_i(m_i\log m_i+m_i p^{1/3}))=O(m\log m+mp^{1/3})$ time. Their total number of signature records is $O(mp^{1/3})=O(k)$, so \Cref{thm:outeralgorithm} adds $O(k\log k)=O(k\log m)$ time, taking $p\leq m$. The total time is therefore $O((m+k)\log m)=O(mp^{1/3}\log m)$.
\end{proof}
By the above theorem, when $p=1$, namely the multiple choice constrained case, $k=O(m)$, and we obtain the desired $O(m\log m)$ running time.

For matroid constraints, we get a more modest result.
\matroid*
\begin{proof}
    Assume the signature function has $k_i$ breakpoints for agent $i$ and $m$ is the total number of incentives. The running time of computing one signature function is $O(Tk_i)$. 
    Computing all signature functions takes $O(Tk)$ since $\sum_i Tk_i\leq Tk$. 
    By \Cref{thm:outeralgorithm}, constructing the data structure for $\lps$ takes $O(k\log k) = O(k\log m)$ time. So together we have the running time $O( Tk+ k\log m) $. 
    \end{proof}

For practical purpose, once we have the signature functions for single agents, the trade-off curve can be easily computed with OLAP databases. See the Technical Appendix for details.

Next we discuss the work per update of a single agent. Each single agent update can only change breakpoints of the associated signature function. If at most $s$ incentives are related to the agent before and after the update, at most $O(s^{4/3})$ breakpoint changes can happen, The maintenance time would be $O(s^{4/3}\log k)$ after obtaining the edits; including signature recomputation gives $O(Ts^{4/3}+s^{4/3}\log k)$ for a general matroid. As we assumed in the scenario, $s$ is small because no agent is related to too many incentives, hence this would be a fast operation in modern systems.

\section{Submodular Objective}

In this section we discuss a more general case where the objective function is submodular instead of linear.
Submodular objective function reflects the diminishing marginal gain phenomenon thus is closer to reality. In practice, agents usually receive incentives for free. We further assume that the submodular objective function $g:2^E\to \R$ is monotone, non-negative and satisfies $g(\emptyset)=0$. Thus, we are particularly interested in polymatroid objective functions.

The submodular incentive allocation problem can be formulated as follows:
\begin{align*}
    \max_x \; g(x) \\
    s.t.   \quad c \cdot x &\leq B \\
    x_{E_i}&\in \mathcal{F}_i \quad\forall i\in [n]\\
    x&\in \{0,1\}^m
\end{align*}

where $g:\{0,1\}^m \to \R$ is a polymatroid set function.



We define the signature function for agent $i$ to be $f_i(\lambda) = \max\left\{g(x)-\lambda c\cdot x | x\in \mathcal{F}_i \cap \{0,1 \}^{|E_i|}\right\}$. The Lagrangian dual can be written as $\min_{\lambda\geq 0} B\lambda + \sum_i f_i(\lambda)$. 

Note that the properties of signature functions in \Cref{sec:sigf} are independent of the objective function and constraints. Therefore, $f_i(\lambda)$ is piecewise linear and convex, even for submodular objectives. However, our algorithm does not extend to the submodular case.

Our method requires an efficient algorithm for evaluating $f_i(\lambda)$.
For the submodular case, we need to solve a constrained submodular maximization problem to compute $f_i(\lambda)$. 
It is known that this problem is NP-hard \cite{calinescu_maximizing_2011}, so we consider solving it approximately. $g(x)-\lambda c\cdot x$ is still submodular for $x$ but is not monotone. 
The best approximation rate is $g(x)-\lambda \cdot x \geq (1-1/e)g(x_{OPT})-\lambda \cdot x_{OPT}$ in \cite{sviridenko_optimal_2017}.
However, the running time is impractical for implementations and currently no nontrivial upper bound is known for the number of breakpoints on $f_i$.

\section{Computational Results}
Our paper is mostly theoretical, but we did an implementation to see how does theory fair in practice for the \emph{cardinality constraint} case. Do we need to use advanced computational geometry tools to obtain good result in practice?

For the cardinality case we implemented two algorithms. The first uses the kinetic heap method described in Section~2.3 of \cite{Chan1999RemarksOK} to construct the $p$-level in $O(m\log m+k\log^{1+\varepsilon}m)$ time and $O(m)$ working space, for any fixed $\varepsilon>0$, where $k$ counts processed level vertices. The other is a simple scan line algorithm. The algorithm maintains the $p$-level by looking at all intersections with the current $p$th line, which gives an $O(km)$ running time. These benchmarks do not include the newer algorithm of \cite{ChanChengZheng2024}.

All tests were run on MacOS operating system with an M2Max cpu. 
\Cref{tab:klevel} shows the average running time of 10 random instances each case, the numbers are drawn from a uniform sample.

\begin{table}[!ht]
    \centering
    \begin{tabular}{ccccc}
        \toprule
         \multirow{2}*{$m$} & \multicolumn{2}{c}{$p=20$} & \multicolumn{2}{c}{$p=40$} \\
         \cmidrule(lr){2-3}  \cmidrule(lr){4-5} 
         & scan & opt & scan & opt \\
         \midrule
        $1\times 10^3$            & 0.000 & 0.000 & 0.000 & 0.001 \\
        $5\times 10^3$    & 0.003 & 0.005 & 0.006 & 0.005 \\
        $1\times 10^4$            & 0.008 & 0.010 & 0.014 & 0.012 \\
        $5\times 10^4$    & 0.043 & 0.089 & 0.080 & 0.087 \\
        $1\times 10^5$            & 0.094 & 0.216 & 0.173 & 0.223 \\
        $5\times 10^5$    & 0.528 & 2.911 & 0.937 & 2.952 \\
        $1\times 10^6$            & 1.147 & 7.291 & 1.989 & 7.140 \\
        $1\times 10^7$            & 12.994 & 100.512 & 23.863 & 101.675 \\
        
        \bottomrule
    \end{tabular}
    \begin{tabular}{ccccc}
         \multirow{2}*{$m$} & \multicolumn{2}{c}{$p=2000$} & \multicolumn{2}{c}{$p=m/5$}\\
         \cmidrule(lr){2-3}  \cmidrule(lr){4-5} 
         & scan & opt & scan & opt \\
         \midrule
        $1\times 10^3$      & - & - & 0.003& 0.002 \\
        $5\times 10^3$      & 0.137 & 0.027 & 0.091& 0.02\\
        $1\times 10^4$      & 0.384 & 0.048 & 0.384 & 0.048\\
        $5\times 10^4$      & 2.634 & 0.187 & 9.531& 0.326\\
        $1\times 10^5$      & 5.795 & 0.397 & 38.275& 1.222\\
        $5\times 10^5$      & 33.760 & 3.398 & TLE & 10.500 \\
        $1\times 10^6$      & 72.485 & 7.604 & TLE & 23.203\\
        $1\times 10^7$      & TLE & 101.775 & TLE & 133.974\\
        
        \bottomrule
    \end{tabular}
    \caption{The time (in seconds) to compute the breakpoints on the signature function under cardinality constraint using the kinetic heap method (opt) and the scan line algorithm (scan).}
    \label{tab:klevel}
\end{table}

The scan line algorithm is surprisingly good for small $p$. It is because in those cases $k$ is actually very small, much smaller than $m$. There is an intuitive argument. If $p=1$, then $k$ is bounded by the number of points on the convex hull of a uniform random sample of $m$ points in a rectangle. The expected value is $O(\log m)$ \cite{randompoint}. Note as $p$ becomes larger, for a random set of points $k$ also becomes larger, and therefore the $O(mk)$ algorithm suffers. When $k=O(m)$, the kinetic heap method has a running time of $O(m\log^{1+\varepsilon}m)$ for any fixed $\varepsilon>0$.

For the matroid case we tested our algorithm on laminar matroids. 
The laminar matroid is defined on a laminar family. Given a set $E$, a family $\mathcal{A}$ of subsets of $E$ is \emph{laminar} if for every two sets $A,B\in \mathcal{A}$ with $A\cap B\not= \emptyset$, either $A\subseteq B$ or $B\subseteq A$. Define the capacity function $c: \mathcal{A} \rightarrow \mathbb{R}$. The independent set $\mathcal{I}$ of a laminar matroid $\mathcal{L}$ is the set of subsets $I$ of $E$ such that $|I\cap A|\leq c(A)$ for all $A\in \mathcal{A}$ \cite{fife_laminar_2017}. 
We implemented the Eisner-Severance method on laminar matroids for demonstration purposes. \Cref{tab:matroid} shows the average running time for computing the signature function under laminar matroid constraints.

\begin{table}[!t]
  \centering
  \resizebox{\columnwidth}{!}{
  \begin{tabular}{cc|cc|cc}
      \toprule
       $m$ & $t$ & $m$ & $t$ & $m$ & $t$ \\
       \midrule
       $1\times 10^3$         & 0.0161 & $1.1\times 10^4$ & 1.5270 & $2.5 \times 10^4$ & 6.8601 \\
       $2\times 10^3$ & 0.0575 & $1.2\times 10^4$ & 1.8602 & $3 \times 10^4$   & 7.8284 \\
       $3\times 10^3$ & 0.1375 & $1.3\times 10^4$ & 1.8959 & $3.5 \times 10^4$ & 12.1495 \\
       $4\times 10^3$ & 0.2093 & $1.4\times 10^4$ & 2.3682 & $4 \times 10^4$   & 15.6755 \\
       $5\times 10^3$ & 0.3547 & $1.5\times 10^4$ & 2.4609 & $4.5 \times 10^4$ & 18.9251 \\
       $6\times 10^3$ & 0.5193 & $1.6\times 10^4$ & 2.7309 & $5 \times 10^4$   & 25.0841 \\
       $7\times 10^3$ & 0.6469 & $1.7\times 10^4$ & 3.1121 & $5.5 \times 10^4$ & 24.6682 \\
       $8\times 10^3$ & 0.7878 & $1.8\times 10^4$ & 3.7226 & $6 \times 10^4$   & 26.5710 \\
       $9\times 10^3$ & 1.0582 & $1.9\times 10^4$ & 4.3983 & $6.5 \times 10^4$ & 34.9471 \\
       $1\times 10^4$        & 1.2360 & $2 \times 10^4$  & 4.2026 & $7 \times 10^4$   & 44.8108 \\
      \bottomrule
  \end{tabular}
  }
  \caption{The time (in seconds) to compute the signature function under matroid constraint.}
  \label{tab:matroid}
\end{table}

\section*{Acknowledgments}
This work was supported by the National Natural Science Foundation of China under grant 62372093, and by Science and Technology Department of Sichuan Province under grant M112024ZYD0170.

\bibliographystyle{named}
\bibliography{ijcai25}
\end{document}